\newcommand{\D}{{\mathrm{d}}}
\newtheorem{theorem}{Theorem}
\begin{document}

\title{Extremal property of a simple cycle}
\author{Alexander N. Gorban}
\affiliation{Department of Mathematics, University of Leicester, United Kingdom}
\email{ag153@le.ac.uk}

\begin{abstract}
We study systems with finite number of states $A_i$ ($i=1,\ldots, n$), which obey the
first order kinetics (master equation) without detailed balance. For any nonzero complex
eigenvalue $\lambda$ we prove the inequality $\frac{|\Im \lambda |}{|\Re \lambda |} \leq
\cot\frac{\pi}{n}$. This bound is sharp and it becomes an equality for an eigenvalue of a
simple irreversible cycle $A_1 \to A_2 \to \ldots \to A_n \to A_1$ with equal rate
constants of all transitions. Therefore, the simple cycle with the equal rate constants
has the slowest decay of the oscillations among all first order kinetic systems with the
same number of states.
\end{abstract}
 \pacs{82.20.Rp, 82.40.Bj, 02.50.Ga}

\maketitle

\paragraph*{Damped oscillations in kinetic systems.}We study systems with finite number of states $A_i$ ($i=1,\ldots, n$). A non-negative
variable $p_i$ corresponds to every state $A_i$. This $p_i$ may be probability, or
population, or concentration of $A_i$. For the vector with coordinates $(p_i)$ we use the
notation $P$.

We assume that the time evolution of $P$ obey first order kinetics (the master equation)
\begin{equation}\label{MAsterEq0}
\frac{\D p_i}{\D t}= \sum_{j, \, j\neq i} (q_{ij}p_j-q_{ji}p_i) \;\; (i=1,\ldots, n) ,
\end{equation}
where $q_{ij}$ ($i,j=1,\ldots, n$, $i\neq j$) are non-negative. In this notation,
$q_{ij}$ is the {\em rate constant} for the transition $A_j \to A_i$. Any set of
non-negative coefficients $q_{ij}$ ($i\neq j$) corresponds to a master equation.

Let us rewrite  (\ref{MAsterEq0}) in the vector notations, $\dot{P}=KP$, where the matrix
$K=(k_{ij})$ has the elements
\begin{equation}
k_{ij}=\left\{\begin{array}{cl}
q_{ij} &\mbox{ if } i\neq j;\\
-\sum_{m, \, m\neq i} q_{m i} &\mbox{ if } i = j .
\end{array}\right.
\end{equation}

The standard simplex $\Delta_n=\{P\ | \ p_i \geq 0, \, \sum_i p_i=1\}$ is {\em
forward-invariant} with respect to the master equation (\ref{MAsterEq0}) because it
preserves positivity and has the conservation law $\sum_i p_i=const$. This means that any
solution of (\ref{MAsterEq0}) $P(t)$ with the initial conditions $P(t_0)\in \Delta_n$
remains in $\Delta_n$ after $t_0$: $P(t)\in \Delta_n$ for $t\geq t_0$. One can use this
forward invariance to prove some known important properties of $K$. For example, there
exists a non-negative vector $P^*\in \Delta_n$ ($p^*_i \geq 0$) such that $KP^*=0$
(equilibrium). Indeed, any continuous map of $\Delta_n \to \Delta_n$ has a fixed point,
therefore $\exp(Kt)$ has a fixed point in $\Delta_n$ for any $t>0$. If $\exp(Kt)P^*=P^*$
for some $P^*\in \Delta_n$ and sufficiently small $t>0$, then $KP^*=0$ because
$\exp(Kt)P=P+tKP+o(t)$.

The proof that $K$ has no nonzero imaginary eigenvalues for the systems with positive
equilibria gives another simple example. We exclude one zero eigenvector and consider $K$
on the invariant hyperplane where $\sum_i p_i=0$. If $K$ has a nonzero imaginary
eigenvalue $\lambda$, then there exists a 2$D$ $K$-invariant subspace $L$, where $K$ has
two conjugated imaginary eigenvalues, $\lambda$ and $\overline{\lambda}=-\lambda$.
Restriction of $\exp(Kt)$ on $L$ is one-parametric group of rotations. For the positive
equilibrium $P^*$ the intersection $(L+P^*)\cap \Delta_n$ is a convex polygon. It is
forward invariant with respect to  the master equation (\ref{MAsterEq0}) because $L$ is
invariant, $P^*$ is equilibrium and $\Delta_n$ is forward invariant. A polygon on a plane
cannot be invariant with respect to one-parametric semigroup of rotations $\exp(Kt)$
($t\geq 0$). This contradiction proves the absence of imaginary eigenvalues. We use the
reasoning based on forward invariance of $\Delta_n$ below in the proof of the main
result.

The master equation obeys the principle of detailed balance if there exists a positive
equilibrium $P^*$ ($p^*_i>0$) such that for each pair $i,j$ ($i\neq j$)
\begin{equation}\label{detbal}
q_{ij} p^*_j=q_{ji} p^*_i .
\end{equation}
After Onsager \cite{Ons}, it is well known that for the systems with detailed balance the
eigenvalues of $K$ are real because $K$ is under conditions (\ref{detbal}) a
self-adjoined matrix with respect to the entropic inner product
$$\langle x,y\rangle=\sum_i \frac{x_i y_i}{p^*_i}$$
(see, for example, \cite{VanKampen1973,YBGE1991}).

Detailed balance is a well known consequence of microreversibility. In 1872 it was
introduced by Boltzmann for collisions \cite{Boltzmann1964}. In 1901 Wegscheider proposed
this principle for chemical kinetics \cite{Wegscheider1901}. Einstein had used this
principle for the quantum theory of light emission and absorbtion (1916, 1917). The
backgrounds of detailed balance had been analyzed by Tolman \cite{Tolman1938}. This
principle was studied further and generalized by many authors
\cite{YangHlavacek2006,GorbYabCES2012}.

Systems without detailed balance appear in applications rather often. Usually, they
represent a subsystem of a larger systems, where concentrations of some of the components
are considered as constant. For example, the simple cycle
\begin{equation}\label{simplecycle}
A_1 \to A_2 \to \ldots \to A_n \to A_1
\end{equation}
is a typical subsystem of a catalytic reaction (a catalytic cycle). The complete reaction
may have the form
\begin{equation}\label{Catcycle}
S+A_1 \to A_2 \to \ldots \to A_n \to A_1+P,
\end{equation}
where $S$ is a substrate and $P$ is a product of reaction.

The irreversible cycle (\ref{simplecycle}) cannot appear as a limit of systems with
detailed balance when some of the constants tend to zero, whereas the whole catalytic
reaction (\ref{Catcycle}) can \cite{GorbYabCES2012}. The simple cycle (\ref{simplecycle})
can be produced from the whole reaction (\ref{Catcycle}) if we assume that concentrations
of $S$ and $P$ are constant. This is possible in an open system, where we continually add
the substrate and remove the product. Another situation when such an approximation makes
sense is a significant excess of substrate in the system, $[S]\gg [A_i]$ (here we use the
square brackets for the amount of the component in the system). Such an excess implies
separation of time and the system of intermediates $\{A_i\}$ relaxes much faster than the
concentration of substrate changes.

In systems without detailed balance the damped oscillations are possible. For example,
let all the reaction rate constants in the simple cycle be equal, $q_{j+1\, j}=q_{1
n}=q>0$. Then the characteristic equation for $K$ is
$$\det(K-\lambda I)=(-q-\lambda)^n+q^{n}(-1)^{n+1}=0$$
 and $\lambda=-q+q\exp\left(\frac{2\pi i k}{n}\right) \; (k=0,\ldots ,n-1)$. For nonzero
$\lambda$,  the ratio of imaginary and real parts of $\lambda$ is
$$\frac{|\Im \lambda |}{|\Re \lambda |} =\frac{\left|\sin\frac{2\pi  k}{n} \right|}{\left|1-\cos\frac{2\pi k}{n} \right|}
=\left|\cot\frac{\pi k}{n}\right|\leq \cot\frac{\pi}{n}\, .$$ The maximal value,
$\cot\frac{\pi}{n}$, corresponds to $k=1$. For large $n$, $\cot\frac{\pi}{n}\approx
\frac{n}{\pi}$ and oscillations in the simple cycle decay rather slowly.

\paragraph*{Estimate of eigenvalues.}Let us consider the general master equation
(\ref{MAsterEq0}) without any assumption of detailed balance.

\begin{theorem}
For every nonzero eigenvalue $\lambda$ of matrix $K$
\begin{equation}
\frac{|\Im \lambda |}{|\Re \lambda |} \leq
\cot\frac{\pi}{n}
\end{equation}
\end{theorem}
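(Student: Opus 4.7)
The plan is to extend the forward-invariance argument that the author already used to rule out purely imaginary eigenvalues, localising it at each vertex of the invariant polygon. Real $\lambda$ give $|\Im\lambda|/|\Re\lambda|=0$ trivially, and the spectrum of $K$ lies in $\{\Re\lambda\le 0\}$ by Gershgorin applied to the columns of $K$, with $\Re\lambda=0$ only at $\lambda=0$. So for nonzero complex $\lambda$ I may assume $\alpha=\Re\lambda<0$ and, replacing $\lambda$ by $\bar\lambda$ if necessary, $\beta=\Im\lambda>0$. If $K$ is reducible, I first perturb $q_{ij}\mapsto q_{ij}+\varepsilon$ to obtain an irreducible $K_\varepsilon$ with strictly positive equilibrium $P^*_\varepsilon$, prove the bound for $K_\varepsilon$, and recover the general case as $\varepsilon\to 0$ by continuity of eigenvalues.

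I next build the 2D real $K$-invariant subspace $L\subset\{\sum p_i=0\}$ carrying $\lambda,\bar\lambda$, choose a real basis of $L$ in which $K|_L$ is identified with multiplication by $\lambda$ on $\mathbb{C}$, and form the polygon $\Pi=(P^*+L)\cap\Delta_n$. After translating $P^*$ to the origin of $L$, $\Pi$ contains $0$ in its interior and is forward-invariant under $z\mapsto e^{\lambda t}z$ (because $P^*+L$ is $K$-invariant and $\Delta_n$ is forward-invariant). Since $\Delta_n$ is cut out by the $n$ half-spaces $\{p_i\geq 0\}$, $\Pi$ has at most $n$ vertices; let $m\le n$ denote their actual number.

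The main step is a local analysis at each vertex. At a vertex $v$ with counter-clockwise successor $v'$, set $\theta_v=\arg(v'-v)-\arg v\in(0,\pi)$. Forward-invariance requires the velocity $\lambda v$ to lie in the tangent cone of $\Pi$ at $v$, and a short calculation shows this is equivalent to $\arg\lambda\ge\theta_v$; hence $\arg\lambda\ge\max_v\theta_v$. The geometric claim I therefore have to prove is: \emph{any convex polygon with $m$ vertices containing $0$ in its interior satisfies $\max_v\theta_v\ge\pi/2+\pi/m$.} Applying the law of sines in each triangle $\triangle(0,v_k,v_{k+1})$ and setting $\Delta\phi_k=\arg v_{k+1}-\arg v_k$ gives $|v_{k+1}|/|v_k|=\sin\theta_{v_k}/\sin(\theta_{v_k}-\Delta\phi_k)$, whose product around the polygon equals $1$. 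Supposing for contradiction that every $\theta_{v_k}<\pi/2+\pi/m$, the strict monotone decrease of $\theta\mapsto\sin\theta/\sin(\theta-\Delta\phi)$ makes the product strictly greater than $\cos^m(\pi/m)/\prod_k\cos(\Delta\phi_k-\pi/m)$; since $\sum_k(\Delta\phi_k-\pi/m)=\pi$ and $\log\cos$ is concave on $(-\pi/2,\pi/2)$, Jensen's inequality yields $\prod_k\cos(\Delta\phi_k-\pi/m)\le\cos^m(\pi/m)$, so the product exceeds $1$, a contradiction.

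Combining the steps, $\arg\lambda\ge\pi/2+\pi/m\ge\pi/2+\pi/n$; writing $\arg\lambda=\pi-\psi$ with $\psi\in(0,\pi/2)$ gives $\psi\le\pi/2-\pi/n$, and thus $|\Im\lambda|/|\Re\lambda|=\tan\psi\le\cot(\pi/n)$. The hard part will be the geometric claim: the natural hope that $\sum_v\theta_v$ is determined by $m$ alone fails for non-symmetric polygons, so the law-of-sines/Jensen route seems genuinely necessary. A minor technicality is that the Jensen step tacitly assumes $\Delta\phi_k<\pi/2+\pi/m$ for every $k$; if some $\Delta\phi_{k_0}\ge\pi/2+\pi/m$, then the triangle inequality $\theta_{v_{k_0}}>\Delta\phi_{k_0}$ already yields the conclusion.
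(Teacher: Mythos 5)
Your proposal is correct, and it shares the paper's overall skeleton: reduce to a positive equilibrium by perturbation, pass to the $2$D invariant plane $L+P^*$, observe that the polygon $(L+P^*)\cap\Delta_n$ has at most $n$ vertices and is forward-invariant, and translate forward-invariance into an angle condition at each vertex (your $\arg\lambda\ge\theta_v$ is exactly the paper's $\delta\le\alpha_i$ with $\alpha_i=\pi-\theta_v$). Where you genuinely diverge is in the geometric core. The paper sets up the constrained optimization problem $\min_i\alpha_i\to\max$ subject to $\sum_i\beta_i=2\pi$, $\alpha_i+\beta_i+\gamma_i=\pi$ and the closure condition $\prod_i\sin\alpha_i=\prod_i\sin\gamma_i$, and characterizes the maximizer in two stages: an implicit-function-theorem perturbation to show all $\alpha_i$ coincide at the optimum, then a separate concavity argument (for $\log\sin$) to show all $\beta_i$ coincide, identifying the regular $n$-gon as extremal. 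You instead prove the needed inequality $\max_v\theta_v\ge\pi/2+\pi/m$ by direct contradiction: the law-of-sines product around the polygon equals $1$, the ratio $\sin\theta/\sin(\theta-\Delta\phi)$ is strictly decreasing in $\theta$, and Jensen applied to the concave $\log\cos$ on the shifted angles $\Delta\phi_k-\pi/m$ (whose sum is $\pi$) forces the product above $1$ if every $\theta_v<\pi/2+\pi/m$; your side remark disposing of the case $\Delta\phi_{k_0}\ge\pi/2+\pi/m$ via $\theta_{v_{k_0}}>\Delta\phi_{k_0}$ closes the one domain issue in the Jensen step. This buys you something real: you never need to assert that the optimization problem attains its maximum or to justify the perturbation of the active constraints, which are the least rigorous points of the paper's argument, and your Gershgorin observation handles the $\Re\lambda=0$ boundary case more cleanly than the paper's rotation-invariance argument. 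The price is that you get only the inequality, whereas the paper's variational route also exhibits the regular $n$-gon (hence the equal-rate cycle) as the unique extremal configuration; since sharpness is checked separately by the explicit cycle spectrum, nothing is lost for the theorem as stated.
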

\begin{proof}
Let us assume that the master equation (\ref{MAsterEq0}) has a positive equilibrium
$P^*\in \Delta_n$: for all $i=1,\ldots,n$  $p_i^*>0$ and
$$\sum_j q_{ij}p^*_j= \sum_j q_{ji} p^*_i .$$
The systems with non-negative equilibria may be considered as limits of the systems with
positive equilibria.

Let $\lambda$ be a complex eigenvalue of $K$ and let $L$ be a 2D real subspace of the
hyperplane $\sum_i p_i=0$ that corresponds to the pair of complex conjugated eigenvalues,
$(\lambda, \overline{\lambda})$. Let us select a coordinate system in the plane $L+P^*$
with the origin at $P^*$ such that restriction of $K$ on this plane has the following
matrix
$$\mathcal{K}=\left[\begin{array}{cc}
\Re \lambda &- \Im \lambda\\
\Im \lambda & \Re \lambda
\end{array}\right]\, .
$$

In this coordinate system $$\exp (t\mathcal{K})=\left[\begin{array}{cc}
\exp(t \Re \lambda) \cos (t \Im \lambda) &-\exp(t \Re \lambda) \sin(t \Im \lambda)\\
\exp(t \Re \lambda) \sin(t \Im \lambda) & \exp(t \Re \lambda) \cos (t \Im \lambda)
\end{array}\right]\, .
$$

The intersection $\mathcal{A}=(L+P^*)\cap \Delta_n$ is a polygon. It has not more than
$n$ sides because $\Delta_n$ has $n$ $(n-2)$-dimensional faces (each of them is given in
$\Delta_n$ by an equation $p_i=0$). For the transversal intersections (the generic case)
this is obvious. Non-generic situations can be obtained as limits of generic cases when
the subspace $L$ tends to a non-generic position. This limit of a sequence of polygons
cannot have more than $n$ sides if the number of sides for every polygon in the sequence
does nor exceed $n$.

Let the polygon $\mathcal{A}$ have $m$ vertices $\mathbf{v}_j$ ($m\leq n$). We move the
origin to $P^*$ and enumerate these vectors  $\mathbf{x}_i=\mathbf{v}_i-P^*$
anticlockwise (Fig~\ref{Fig:Polygon}). Each pair of vectors
$\mathbf{x}_i,\mathbf{x}_{i+1}$ (and $\mathbf{x}_m,\mathbf{x}_{1}$) form a triangle with
the angles $\alpha_i$, $\beta_i$ and $\gamma_i$, where $\beta_i$ is the angle between
$\mathbf{x}_i$ and $\mathbf{x}_{i+1}$, and $\beta_m$ is the angle between $\mathbf{x}_m$
and $\mathbf{x}_{1}$. The Sine theorem gives $\frac{|\mathbf{x}_i|}{\sin
\alpha_i}=\frac{|\mathbf{x}_{i+1}|}{\sin \gamma_i}$, $\frac{|\mathbf{x}_m|}{\sin
\alpha_m}=\frac{|\mathbf{x}_{1}|}{\sin \gamma_1}$.

Several elementary identities and inequalities hold:
\begin{equation}\label{Cond}
\begin{split}
&0 <\alpha_i,\beta_i, \gamma_i<\pi; \;\; \sum_i \beta_i=2\pi; \;\; \alpha_i+ \beta_i + \gamma_i =\pi; \\
&\prod_i \sin \alpha_i =\prod_i \sin \gamma_i \mbox{ (the closeness condition).}
\end{split}
\end{equation}
These  conditions  (\ref{Cond})  are necessary and sufficient for the existence of a
polygon $\mathcal{A}$ with these angles which is star-shaped with respect to the origin.

Let us consider the anticlockwise rotation ($\Im \lambda <0$, Fig.~\ref{Fig:Polygon}).
The case of clockwise rotations differs only in notations. For the angle $\delta$ between
$K\mathbf{x}_i$ and $\mathbf{x}_i$, $\sin \delta=-\Im \lambda$,  $\cos \delta=-\Re
\lambda$ and $\tan \delta=\frac{\Im \lambda}{\Re \lambda}$.

\begin{figure}
\centering{
\includegraphics[width=0.3\textwidth]{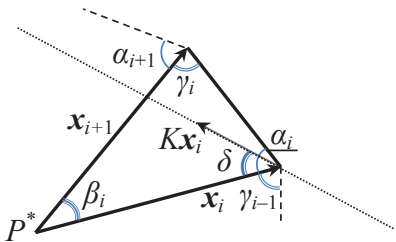}
\caption{\label{Fig:Polygon}The polygon $\mathcal{A}$ is presented as a sequence of vectors $\mathbf{x}_i$.
The angle $\beta_i$ between vectors $\mathbf{x}_i$ and $\mathbf{x}_{i+1}$ and the angles $\alpha_i$ and $\gamma_i$ of the triangle with
sides  $\mathbf{x}_i$ and $\mathbf{x}_{i+1}$ are shown.
In the Fig., rotation goes anticlockwise, i.e. $\Im \lambda <0$. In this case, the polygon $\mathcal{A}$
is invariant with respect to the semigroup $\exp (t\mathcal{K})$ ($t\geq 0$)
if and only if $\delta \leq \alpha_i$ for all $i=1, \ldots, m$, where $\delta$ is the angle between the vector field $\mathcal{K} \mathbf{x}$ and
the radius-vector $\mathbf{x}$. } }
\end{figure}

For each point $\mathbf{x} \in L+P^*$ ($\mathbf{x} \neq P^*$), the straight line
$\{\mathbf{x}+\epsilon \mathcal{K}\mathbf{x} \, | \, \epsilon \in \mathbb{R}\}$ divides
the plane $L+P^*$ in two half-plane (Fig.~\ref{Fig:Polygon}, dotted line). Direct
calculation shows that the semi-trajectory $\{\exp (t\mathcal{K})\mathbf{x}\, |\, t\geq
0\}$ belongs to the same half-plane as the origin $P^*$ does. Therefore, if $\delta \leq
\alpha_i$ for all $i=1,\ldots,m$ then the polygon $\mathcal{A}$ is forward-invariant with
respect to the semigroup $\exp (t\mathcal{K})$ ($t\geq 0$). If $\delta >\alpha_i$ for
some $i$  then for sufficiently small $t>0$  $\exp (t\mathcal{K})\mathbf{x}_i \notin
\mathcal{A}$ because $\mathcal{K}\mathbf{x}_i $ is the tangent vector to the
semi-trajectory at $t=0$.

Thus, for the anticlockwise rotation ($\Im \lambda <0$), the polygon $\mathcal{A}$ is
forward-invariant with respect to the semigroup $\exp (t\mathcal{K})$ ($t\geq 0$) if and
only if $\delta \leq \alpha_i$ for all $i=1,\ldots,m$. The maximal $\delta$
for which $\mathcal{A}$ is still forward-invariant is $\delta_{\max}=\min_i \{\alpha_i\}$. We have to find the polygon with
$m\leq n$ and the maximal value of $\min_i \{\alpha_i\}$. Let us prove that this is a
regular polygon with $n$ sides. Let us find the maximizers  $\alpha_i,\beta_i, \gamma_i$
($i=1, \ldots, m$) for the optimization problem:
\begin{equation}\label{OptimProbl}
\min_i \{\alpha_i\} \to \max\; \mbox{subject to conditions (\ref{Cond}).}
\end{equation}
For solution of this problem, all $\alpha_i$ are equal. To prove this equality, let us
mention that $\min_i \{\alpha_i\}<\frac{\pi}{2}$ under conditions (\ref{Cond}) (if all
$\alpha_i\geq \frac{\pi}{2}$ then the polygonal chain $\mathcal{A}$ cannot be closed).
Let $\min_i \alpha_i=\alpha$. Let us substitute in (\ref{Cond}) the variables $\alpha_i$
which take this minimal value by $\alpha$. The derivative of the left hand part of the
last condition in (\ref{Cond}) with respect to $\alpha$ is not zero because
$\alpha<\frac{\pi}{2}$. Assume that there are some $\alpha_j> \alpha$. Let us fix the
values of $\beta_i$ ($i=1, \ldots, m$). Then $\gamma_i$ is a function of $\alpha_i$,
$\gamma_i=\pi-\beta_i-\alpha_i$. We can use the implicit function theorem to increase
$\alpha$ by a sufficiently small number $\varepsilon>0$ and to change the non-minimal
$\alpha_j$ by a small number too, $\alpha_j\mapsto \alpha_j-\theta$;
$\theta=\theta(\varepsilon)$. Therefore, at the solution of (\ref{OptimProbl}) all
$\alpha_j = \alpha$ ($j=1,\ldots,m$).

Now, let us prove that for solution of the problem (\ref{OptimProbl}) all $\beta_i$ are
equal. We exclude $\gamma_i$ from conditions (\ref{Cond}) and write $\beta_i+\alpha<
\pi$; $0<\beta_i,\alpha$;
\begin{equation}\label{NewCOnd}
m \log\sin \alpha =\sum_i \log \sin (\beta_i+\alpha).
\end{equation}
Let us consider this equality as equation with respect to unknown $\alpha$. The function
$\log \sin x$ is strictly concave on $(0,\pi)$. Therefore, for $x_i\in (0,\pi)$
$$\log \sin\left( \frac{1}{m}\sum_{i=1}^m x_i\right)\geq \frac{1}{m}\sum_{i=1}^m\log \sin x_i$$
and the equality here is possible only if all $x_i$ are equal.  Let $\alpha^*\in
(0,\pi/2)$ be a solution of (\ref{NewCOnd}). If not all the values of $\beta_i$ are equal
and we replace $\beta_i$ in (\ref{NewCOnd}) by the average value, $\beta=\frac{2\pi}{m}$,
then the value of the right hand part of (\ref{NewCOnd}) increases and $\sin \alpha^* <
\sin (\beta+\alpha^*)$. If we take all the $\beta_i$ equal then (\ref{NewCOnd})
transforms into elementary trigonometric equation $\sin \alpha = \sin (\beta+\alpha)$.
The solution $\alpha$ of equation (\ref{NewCOnd}) increases when we replace $\beta_i$  by
the average value: $\alpha> \alpha^*$ because $\sin \alpha^* < \sin (\beta+\alpha^*)$,
$\alpha \in (0,\pi/2)$ and $\sin \alpha$ monotonically increases on this interval.  So,
for the maximizers of the conditional optimization problem (\ref{OptimProbl}) all
$\beta_i=\frac{2\pi}{m}$ and $\alpha_i=\gamma_i=\frac{\pi}{2}-\frac{\pi}{m}$. The maximum
of $\alpha$ corresponds to the maximum of $m$. Therefore, $m=n$. Finally, $\max
\{\delta\}=\frac{\pi}{2}-\frac{\pi}{n}$ and
$$\max\left\{\frac{|\Im \lambda |}{|\Re \lambda |}\right\} =
\cot\frac{\pi}{n} .$$ This is exactly the same value as for an eigenvalue of the simple
cycle of the lengths $n$ with equal rate constants, $\lambda=-q(1-\exp(\frac{2\pi
i}{n}))$.
\end{proof}

\paragraph*{Discussion.}
The simple cycle with the equal rate constants gives the slowest decay of oscillations or, in some sense, the slowest relaxation
among all first order kinetic systems with the same number of components. The extremal properties of the simple cycle with equal constants
were noticed in numerical experiments 25 years ago \cite{BochByk1987}. V.I. Bykov formulated the hypothesis that this
system has extremal spectral properties. This paper gives the answer: yes, it has.

\end{document}